\newtheorem{proposition}{Proposition}[section]
\newtheorem{corollary}{Corollary}[section]
\newtheorem{theorem}{Theorem}[section]
\theoremstyle{definition}
\newtheorem{definition}{Definition}[section]
\newtheorem{assumption}{Assumption}[section]
\newcommand{\In}{\subseteq}
\newcommand{\IN}{\mathbb{N}}
\newcommand{\IZ}{\mathbb{Z}}
\newcommand{\IR}{\mathbb{R}}
\newcommand{\ld}{\mathrm{ld}}
\title{Computational Complexity of Iterated Maps on the Interval\\
(Extended Abstract)}
\author{Christoph Spandl
\institute{Institut f\"{u}r Theoretische Informatik, Mathematik und Operations Research}
\institute{Universit\"{a}t der Bundeswehr M\"{u}nchen\\
D-85577 Neubiberg, Germany}
\email{christoph.spandl@unibw.de}
}
\begin{document}
\maketitle

\begin{abstract}
The exact computation of orbits of discrete dynamical systems on the
interval is considered. Therefore, a multiple-precision floating point
approach based on error analysis is chosen and a general algorithm is
presented. The correctness of the algorithm is shown and the
computational complexity is analyzed. As a main result, the
computational complexity measure considered here is related
to the Ljapunow exponent of the dynamical system under consideration.
\end{abstract}

\section{Introduction}

Consider a discrete dynamical system $(D,f)$ on some compact interval $D\In\IR$,
called the phase space, given by a function $f:D\to D$, a recursion relation
$x_{n+1}=f(x_n)$ and an initial value $x_0\in D$. The sequence $(x_n)_n$ of
iterates is called the orbit of the dynamical system in phase space corresponding
to the initial value $x_0$. If such a dynamical system is implemented, that is
a computer program is written for calculating a finite initial
segment of the orbit for given $x_0$, care has to be taken in choosing
the appropriate data structure for representing real numbers. Traditionally,
IEEE 754 {\tt double} floating point numbers \cite{ieee87} are used. However,
if the dynamical system shows chaotic behavior, a problem arises. The finite
and constant length of the mantissa of a {\tt double} variable causes round
off errors which are magnified after each iteration step. Only after a few
iterations, the error is so big that the computed values are actually
useless \cite{Mue01}. To put things right, a rigorous method for computations
with real numbers has to be used. In \cite{Bla05}, this issue is addressed
for the logistic map which is also consiedred as a starting point in the next
section. There, the exact real arithmetic in the form of centered intervals
with bounded error terms is used as described in \cite{bl06}. However, the
notation used in \cite{Bla05, bl06} is an algebraic one based on arbitrary
large integers. On the other hand, the aim of the present paper is to keep
the notation as close as possible to the standard in scientific computing
but being precise in the sense of exact real arithmetic. This has as a
consequence first that the basic data type is not an integer as considered
in \cite{Bla05, bl06}, but a floating point number with a definite
mantissa length. Second, the type of error considered here is the
relative error as is standard in floating point arithmetic - in contrast
to the absolute error considered in \cite{Bla05, bl06}. In practice, a
multiple precision floating point library providing floating point
numbers with arbitrary high mantissa length have to be used.
In the following, it is analyzed how the needed
mantissa length behaves in multiple-precision computations of
iterates of discrete dynamical systems. The mantissa length needed
for floating point numbers such that any computed point of the orbit
has a specified and guaranteed accuracy is examined. Therefore,
a precise mathematical framework for floating point computations
has to be established. The main result shows that the
ratio of mantissa length to iteration length in the limit of
iteration length to infinity is related to the Ljapunow exponent.
Comparing, in \cite{Bla05} only the logistic map is
considered explicitly and the connection to the Ljapunow exponent
is not stated, but observed numerically. In the present paper,
this connection is shown mathematically for a general discrete
dynamical system $(D,f)$.
This result also gives some advice for economically designing exact
algorithms simulating one-dimensional discrete dynamical systems.

\section{Roundoff Error, Error Propagation and Dynamic Behavior}

In this section, the discrete dynamical system $(D,f_\mu)$ with $D=[0,1]$ and
$f_\mu:D\to D$, $f_\mu(x):=\mu x(1-x)$ for some control parameter $\mu\in(0,4]$
is investigated. In the literature, the recursion relation $x_{n+1}=f_\mu(x_n)$ is
called the {\em logistic equation} \cite{ce80}. When implementing the logistic
equation on a real computer and demanding to obtain exact values for the
orbit $(x_n)_n$, the analysis of roundoff errors and of error propagation
requires some care. This is due to the fact that for some values of $\mu$
the dynamics is highly chaotic and therefore inaccuracies are magnified
exponentially in time \cite{ce06, hsd04}.

In the following, for a given initial value $x_0$, the true orbit is denoted by
$(x_n)_n$, whereas the really computed orbit, suffering from roundoff errors
and error propagation, is denoted by $(\hat{x}_n)_n$. Note that even
$\hat{x}_0$ may differ form $x_0$ since the conversion to a floating point
number may cause the very first roundoff error. One goal of this section
is to give a rigorous estimation of the total error in dependence of the
iteration step $n$.

Calculating the orbit $(\hat{x}_n)_n$, two types of error are present.
First, error propagation due to the iteration scheme and second the
roundoff error caused by the calculation of $f_\mu$. Now, let $\hat{x}_n$
for some $n\in\IN$ be given. Then the true error after one iteration step
is $\hat{x}_{n+1} - x_{n+1}$. Since in reality not $f_\mu(\hat{x}_n)$ is
calculated but some erroneous approximation $\hat{f_\mu}(\hat{x}_n)$,
the true error can be written as
$\hat{x}_{n+1} - x_{n+1} = \hat{f}_\mu(\hat{x}_n)-f_\mu(x_n)$.
Hence, the true error can be written as a sum
\begin{equation}
\label{main_err}
\hat{x}_{n+1} - x_{n+1} = (f_\mu(\hat{x}_n)-f_\mu(x_n)) +
 (\hat{f_\mu}(\hat{x}_n) - f_\mu(\hat{x}_n))
\end{equation}
of two terms. The first term describes solely the error propagation
while the second term gives exactly the newly produced error
due to the approximate calculation of $f_\mu$.

To handle the exact values of both errors computationally, interval
arithmetic can be used \cite{ah83}. Interval arithmetic can be seen
in the setting here as a special case of the computational model
of TTE \cite{wh00}, which gives a precise notion for describing
computations over the real numbers. Another strongly related model,
which in some sense reflects the situation here more adequate is the
Feasible Real RAM model \cite{BH98}. For the sake of simplicity however,
an interval setting is used here. For any time step $n$, let
the phase point $x_n$ together with its error be represented by two
floating point numbers $x^l_n$ and $x^u_n$ ($x^u_n\geq x^l_n$) with
given mantissa length $m_n$ forming an interval $[x^l_n,x^u_n]$. The
interval is an enclosure of the real value $x_n$, that is
$x_n\in[x^l_n,x^u_n]$. It is straightforward to transform
the interval to a floating point value $\hat{x}_n$ of
mantissa length $m_n$ by setting
\begin{equation}
\label{val_def}
\hat{x}_n:=gl\left(\frac{x^l_n + x^u_n}{2}\right)
\end{equation}
where $gl(.)$ performs the rounding to nearest floating point number.
The absolute error $e_n:=|\hat{x}_n - x_n|$ of $\hat{x}_n$ can
be estimated via the interval length $d_n:=x^u_n-x^l_n$ by
\begin{equation}
\label{err_def}
e_n\leq\frac{1}{2}d_n + r_n
\end{equation}
where $r_n$ is an error introduced by the rounding operation $gl(.)$ in
Equation \ref{val_def}. An upper bound on $r_n$ will be discussed later, for
now it suffices to say that in general it is small compared to $d_n$.

For doing an error analysis of the logistic equation analytically, some
idealizing assumptions are made. First, the value of $\mu$ is assumed to be
given with such a high precision that no interval representation is needed. Second,
only the error propagation is considered caused by the initial error due to rounding
$x_0$ to some floating point number of mantissa length $m$. Third, the value
of $r_n$ in Equation \ref{err_def} is neglected. The recursion relation then reads
in natural interval extension
\begin{align*}
x^l_{n+1} &= \mu x^l_n(1-x^u_n)\\
x^u_{n+1} &= \mu x^u_n(1-x^l_n)
\end{align*}
with the interval length $d_n$  given by the recursion relation
\begin{align*}
d_{n+1} &= x^u_{n+1} - x^l_{n+1} = \mu(x^u_n - x^u_n x^l_n - x^l_n + x^u_n x^l_n)\\
 &= \mu d_n
\end{align*}
with the obvious solution $d_n=\mu^ n d_0$. Finally the absolute error $e_n$
of $\hat{x}_n$ according to Equation \ref{val_def} can be bounded from
above by
\begin{equation}
\label{err_bound}
e_n\leq\frac{1}{2}d_n=\frac{1}{2}\mu^n d_0.
\end{equation}
Note that the above analysis only holds if the natural interval extension
for $f_\mu$ is derived from the formula $\mu x(1-x)$. If it is derived
from the formula $f_\mu(x)=\mu(x-x^2)$ or
$f_\mu(x)=\frac{\mu}{4}-\mu(x-\frac{1}{2})^2$, the mathematical analysis
is more difficult. However, the problems described in the following also
appear, but in some different form.

The aim now is to calculate, for given $N\in\IN$, $p\in\IZ$
and mantissa length $m$, the orbit up to time $N$
with relative error $10^{-p}$. That is, for
$(\hat{x}_n)_{0\leq n\leq N}$ should hold
\begin{equation}
\label{prec_req}
e_n=|\hat{x}_n - x_n|\leq 10^{-p} x_n \leq 10^{-p}.
\end{equation}
The ideal assumptions require the somewhat unreal setting
that the mantissa length has to be set to some finite, but big enough value $m$
for representing $x_0$ and a virtually infinite value $m_\infty$ for doing the
iteration. Finally, some upper bound on $d_0$ is needed. The value of $d_0$
is given as the roundoff error by representing $x_0$ as a floating point number
of mantissa length $m$. For that, the well known estimate
\begin{equation}
\label{round_zero}
d_0\leq 2^{-m+1}x_0\leq 2^{-m+1}
\end{equation}
exists. Combining (\ref{prec_req}), (\ref{err_bound}) and (\ref{round_zero})
gives as a sufficient condition
\begin{equation}
\label{res_cond}
\mu^n\cdot 2^{-m}\leq 10^{-p}
\end{equation}
for $n=0,\dots,N$.

The minimal $m$, fulfilling the precision requirement (\ref{prec_req}) on the
relative error of $x_n$, which depends on $x_0$, $N$ and $p$, is
denoted by $m_{min}(x_0,N,p)$. So, the sufficient condition (\ref{res_cond})
gives an upper bound on $m_{min}(x_0,N,p)$ by
\begin{equation}
\label{ub_req}
m_{min}(x_0,N,p)\leq \lceil p\cdot\ld(10) + N\cdot\max(0, \ld(\mu))\rceil
\end{equation}
where $\ld(.)$ is the logarithm to base $2$. At that stage, a central
quantity of this work is introduced which is a kind of complexity measure.
The {\em loss of significance rate} $\sigma(x,p)$, which may depend on
the initial value $x=x_0$ and the precision $p$ is defined by
\begin{equation*}
\sigma(x,p):=\limsup_{N\to\infty}\frac{m_{min}(x,N,p)}{N}.
\end{equation*}
This quantity describes the limiting amount of significant mantissa length
being lost at each iteration step. Significant means here the part of
the places being exact. A general treatment of this complexity measure
is given in the next section. Roughly speaking,
$\lceil\sigma(x_0,p) N + p\cdot\ld(10)\rceil$ is the mantissa length
for any floating point number needed in an algorithm doing the iteration
starting with $x_0$ and calculating up to $x_N$, if the output should be precise
up to $p$ decimal places. Formula \ref{ub_req} gives an upper
bound for the loss of significance rate by
$\sigma(x,p)\leq\max(0, \ld(\mu))$.

It is interesting to see whether the upper bound calculated analytically,
which needed strong idealizations, is in the region of the real value.
So, the logistic equation was implemented using a multiple-precision
interval library. For that purpose, the interval library MPFI
\cite{ReRo02} based on the multiple-precision floating point number
library MPFR \cite{Fousse:2007:MMP}, both written in C, was used.
For each control parameter $\mu$ ranging from $0.005$ to $4$ and a
step size of $0.005$, the orbit for initial condition $0.22$ was
calculated up to $N=2000$. For each $\mu$, the minimum
mantissa length $m_{min}$ needed to guarantee $e_n\leq 10^{-6}x_n$
for $n=0,\dots,N$ was searched. Then, $\sigma_{est}:=m_{min}/N$
was calculated. The result shows that $\sigma_{est}$ exceeds the
analytical bound $\max(0,\ld(\mu))$ only slightly. So, the above made
ideal assumptions seem to be valid. In \cite{Mue01}, the logistic
equation was also investigated for $\mu=3.75$ using the exact real
arithmetic package iRRAM based on the Feasible Real RAM model
\cite{BH98}. In the paper, also the precision needed
to guarantee the exactness of the first $6$ decimal places
are reported up to $N=100000$. The values are in full
agreement with the simulation results performed here.

Hence, for $\mu > 1$, the interval length $d_n$
increases exponentially in time $n$. This result should be interpreted
in terms of the dynamical behavior of the logistic equation. So, at
this point is worth having an analytical look at the behavior
of the dynamical system. Despite the fact that these results
are well known \cite{hsd04, de89}, they are reviewed here for the sake of
self containment. First, the equation possesses in the range $D=[0,1]$
exactly one
fixed point $x^o=0$ if $\mu\in(0,1]$ and exactly two fixed points $x^o=0$
and $x^{(\mu)}=1-\frac{1}{\mu}$ if $\mu\in(1,4]$. Since $f'_\mu(0)=\mu$
and $f'_\mu(x^{(\mu)})=2-\mu$, $x^o$ is a stable fixed point (an attractor,
$|f'_\mu(x^o)|<1$) for $\mu\in(0,1)$ and an unstable fixed point
(a repeller, $|f'_\mu(x^o)|>1$) for $\mu\in(1,4]$.
If $\mu=1$, the only fixed point $x^o$ is hyperbolic ($|f'_1(x^o)|=1$)
and a bifurcation occurs at that value of the control parameter $\mu$.
If $\mu\in(1,3)$, $x^o$ becomes unstable
and the newly occurring fixed point $x^{(\mu)}$ is stable. Finally,
$\lim_{n\to\infty}f^n_\mu(x)=x^{(\mu)}$ for $\mu>1$ and
$\lim_{n\to\infty}f^n_\mu(x)=x^o$ if $\mu\leq 1$ holds for all $x\in(0,1)$.
If $\mu\in(0,1)$, this is a direct consequence
of the contraction mapping principle. If $\mu=1$, observe that
$f_1(x)<x$ holds for all $x\in(0,1)$. Hence, any sequence $(f^n_1(x))_n$,
$x\in(0,1)$, is strictly decreasing and bounded from below. So it
converges to the only fixed point $x^o$. For the case $\mu\in(1,3)$,
the interested reader is referred to the literature: \cite{de89},
Proposition 5.3. At $\mu=3$ a second bifurcation occurs
and for $\mu>3$ the system goes into a region of periodic
behavior with period doubling bifurcations. Finally, for some
$\mu<4$, chaotic behavior is reached.

This analysis shows that in the parameter range $\mu\in(0,3)$,
the orbit converges to the stable fixed point for any initial value
$x_0\in(0,1)$. Furthermore, there exists some closed interval $I\In D$,
which depends on $\mu$, containing the stable fixed point such that
$f_\mu(I)\In I$ holds and $f_\mu$ is a contraction on $I$. The interval
computation using a natural interval extension of the recursion 
formula $\mu x(1-x)$, on the other hand, is not very compatible with
this picture. While for $\mu\in(0,1)$, the results are
in agreement with the dynamical analysis, the calculations for
$\mu\in(1,3)$ are not handled very well by interval arithmetic since
the interval approach would suggest an exponential divergence of
initially nearby orbits which is not true in reality. The reason is
that the natural interval approach implicitly, due to the dependency problem,
takes account only of the global behavior of $f_\mu$ in the form of the global
Lipschitz constant $\max\{|f'_\mu(x)| : x\in D\}=\mu$. However, the local
Lipschitz constant $\max\{|f'_\mu(x)| : x\in [x^l_n,x^u_n]\}$ governs the
real error propagation at time step $n$ and also describes the dynamic behavior.
This notion can be made precise and finally leads to a more efficient
algorithm for computing orbits.

Let us return to Equation \ref{main_err}. The true error is the sum of
the error propagation (first term) according to the iteration and the
roundoff error due to the computation of $f_\mu$ (second term). The
first term of Equation \ref{main_err} can be handled using the mean
value theorem,
$|f_\mu(\hat{x}_n)-f_\mu(x_n)|=|f'_\mu(y_n)|\cdot |\hat{x}_n - x_n|$
with $y_n\in[\hat{x}_n-e_n,\hat{x}_n+e_n]$. This gives directly the
bound
\begin{equation*}
|f_\mu(\hat{x}_n)-f_\mu(x_n)|\leq
\sup(|f'_\mu([\hat{x}_n-e_n,\hat{x}_n+e_n])|)e_n.
\end{equation*}
The second term can be estimated the following way. As discussed in
\cite{wi63}, the roundoff error produced in calculating $f_\mu$ can be
estimated by
\begin{equation*}
|\hat{f}_\mu(\hat{x})-f_\mu(\hat{x})|\leq 1.06K2^{-m}|f_\mu(\hat{x})|
\end{equation*}
where $K$ is the number of rounding operations performed in computing
$\hat{f}_\mu$ and $m$ is the mantissa length of $\hat{x}$. In
the case considered here, $K=4$ follows since there are 3 arithmetic
operations and the rounding of $\mu$. It is further crucial to
mention that the factor $1.06$ is only valid if $K\leq 0.1\cdot 2^m$
holds so that the mantissa length must not be chosen too small.
Using the fact that $f_\mu(x)\leq\frac{\mu}{4}$ holds and $f_\mu(x)<x$
if $\mu\leq 1$, the unknown value $|f_\mu(\hat{x})|$ can be eliminated.
This calculation shows that there exists a recursive equation
on an upper bound $\overline{e}_n$ on $e_n$ for all $n$:
\begin{equation}
\label{calc_err}
\overline{e}_{n+1}=L(\hat{x}_n,\overline{e}_n)\overline{e}_n+
1.06K2^{-m}E_\mu(\hat{x}_n),\quad \overline{e}_0=2^{-m}
\end{equation}
with $L(x,e):=\sup(|f'_\mu([x-e,x+e])|)$ and
\begin{equation*}
E_\mu(x):=\begin{cases}
x & \text{if } \mu\leq 1\\
\frac{\mu}{4} & \text{if } \mu >1
\end{cases}.
\end{equation*} 
The idea is now not to calculate intervals, but pairs of values $\hat{x}_n$
and corresponding guaranteed error bounds $\overline{e}_n$. The difference to
the interval concept is not to compute the errors {\em implicitly},
so that only global behavior can be taken into account, but to compute
them {\em explicitly} and independent of the values of interest. It should be
mentioned that the approach described here is compatible with an interval
approach using special centered forms, namely mean value forms \cite{rr84}.
However, the approach here explicitly devises values and errors, describes
an automated error analysis, whereas an interval approach primarily does not
disclose any error. Furthermore, also the iRRAM package permits a more
elaborate way for computing the iteration, based on a similar algorithm as
described above \cite{mu10}. The rounded
values $\hat{x}_n$ are calculated as usual in floating point arithmetic except
that multiple-precision floats are used. The guaranteed error bounds are also
calculated using floating point according to (\ref{calc_err}), where
interval arithmetic is used for calculating $L$. Only standard
precision is needed for calculating the error bounds.
Implementing this improved
algorithm using MPFR and MPFI, the setting as given in the interval case
produces the following result. 
In the parameter range $\mu\in(0,3)$, the dynamic behavior is reflected very
well. Furthermore, in the range $\mu\in[3,4]$, the curve suggests
a relation between the loss of significance rate and the Ljapunow
exponent $\lambda(x)$ for the logistic map (for a curve of the Ljapunow
exponent of the logistic map see \cite{ce80}):
$\sigma(x)=\max(0,\lambda(x))/\ln(2)$ for all $\mu\in(0,4]$. To be
complete, the definition of the Ljapunow exponent reads
\begin{definition}
\label{def:ljap}
Let $(D,f)$ be a dynamical system, $D\In\IR$ compact and $f:D\to D$
continuously differentiable on the interior of $D$. Then the
{\em Ljapunow exponent} at $x$ is defined by
\begin{equation}
\label{ljap_def}
\lambda(x):=\lim_{n\to\infty}\frac{1}{n}\sum^{n-1}_{k=0}
\ln\left|f'(f^k(x))\right|
\end{equation}
if the limit exists.
\end{definition}
The Ljapunow exponent may depend on $x$. However, the following
properties hold:
\begin{itemize}
\item[(a)]
If $(D,f)$ has an {\em invariant measure} $\rho$, then the limit in
Equation \ref{ljap_def} exists $\rho$-almost everywhere.
\item[(b)]
Furthermore, if $\rho$ is {\em ergodic} then $\lambda(x)$
is $\rho$-almost everywhere constant and equal to
\begin{equation*}
\int_D \ln\left|f'(x)\right|\,\rho(dx).
\end{equation*}
\end{itemize}
These properties are a direct consequence of the Birkhoff ergodic theorem,
see \cite{kh95}, Theorem 4.1.2 and Corollary 4.1.9.

\section{The General Algorithm and its Complexity}
Let $D$ be a compact real interval and $f:D\to D$ a self mapping. In the following,
$f$ is assumed to be continuous on $D$, continuously differentiable on the interior
of $D$ and $f'$ is bounded. Furthermore, $f$ and $f'$ are assumed to be
computationally feasible. The precise definition of ``computationally feasible''
is given below.

In this section, a general algorithm for computing the iteration
\begin{equation}
\label{main_it}
x_{n+1}=f(x_n),\quad x_0\in D
\end{equation}
is presented. To be more precise, for given $N\in\IN$ and $p\in\IZ$,
this algorithm computes a finite part of the orbit,
$(x_n)_{0\leq n\leq N}$, exact in the sense that the relative
error at each point $x_n$ does not exceed $10^{-p}$.
The correctness of the algorithm and its computational feasibility is shown.
Finally, its complexity is examined.

\subsection{Syntax, Semantics and the Algorithm}
The set of all computationally accessible real numbers are the floating
point numbers of arbitrary mantissa length denoted by $\hat{\IR}$.
In the following, by a floating point number any real number
is meant which can be expressed by normalized scientific notation.
Hence, the set $\hat{\IR}\In\IR$ of all floating point 
numbers is countable infinite and therefore a natural basis for standard
computability considerations. Let $\hat{x}\in\hat{\IR}$ be some floating
point number, then $\hat{x}$ has as an essential property, its
{\em mantissa length} denoted by $\hat{x}.m$.
Any real number $x$ is represented in an algorithm concerning real
computation by a pair $[x]\in\hat{\IR}^2$ consisting of a floating
point number $[x].fl$ approximating $x$ and an upper bound on the
relative error, $[x].err\geq 0$, also being a floating point number.
Furthermore, the inequality $|[x].fl-x|\leq [x].err$ holds. The pair
$[x]$ is called a {\em finite precision representation} of $x$.
Although $[x].err$ has the property mantissa length, it is irrelevant
in what follows. So, the mantissa length of $[x].err$ can be assumed
to be some big enough constant value. Analogously, a function $f:D\to D$,
$D\In\IR$, is called {\em computationally feasible} if a pair
$[f]$ exists of a computable (partial) function $[f].fl:\hat{\IR}\to\hat{\IR}$
approximating $f$ on $D$ and a computable (partial) function
$[f].erf:\hat{\IR}^2\to\hat{\IR}$ giving an upper bound on the
absolute error of $[f].fl$ in the sense
$|[f].fl([x].fl)-f(x)|\leq [f].erf([x])$. Here, a partial function
$\hat{f}:\hat{\IR}\to\hat{\IR}$ is called {\em computable}
if $\hat{f}$ is computable as a string function over some finite
alphabet where the floating point numbers are interpreted as
finite strings. Finally, computability over integers, computability
of functions with mixed arguments and computable predicates are
defined in a standard way.

The algorithm with the above described specification reads
\begin{tabbing}
{\tt\ 1}\ \= {\tt Input parameter:} $\hat{x}_0$, $N$, $p$\\
{\tt\ 2}\> {\tt Initialize mantissa length} $m\leftarrow m_0$\\
{\tt\ 3}\> {\tt do} \= \ \\
{\tt\ 4}\> \> {\tt Initialize value and error} $[x]\leftarrow gl(\hat{x}_0,m)$\\
{\tt\ 5}\> \> {\tt for} \= $n=0$ to $N$ {\tt do}\\
{\tt\ 6}\> \> \> {\tt If} \= $prec([x],p)$ {\tt is {\bf true} then}\\
{\tt\ 7}\> \> \> \> {\tt If not printed print } $n$, $[x].fl$, $[x].err$ \\
{\tt\ 8}\> \> \> {\tt else break}\\
{\tt\ 9}\> \> \> $[x]\leftarrow[f]([x])$\\
{\tt 10}\> \> {\tt end for}\\
{\tt 11}\> \> $[x].fl.m\leftarrow [x].fl.m + 1$\\
{\tt 12}\> {\tt while} $prec([x],p)$ {\tt is false}
\end{tabbing}
To initialize $[x]$, a rounding function
$gl:\hat{\IR}\times\IN\to\hat{\IR}^2$ is needed where $gl(\hat{x}_0, m_0).fl$
is a floating point number of mantissa length $m_0$ being
the exactly rounded value of $\hat{x}_0$ to some rounding convention.
Clearly, $gl(\hat{x}_0, m_0).err$ is an upper bound on the absolute
rounding error, e.g.\ $gl(\hat{x}_0, m_0).err=\frac{1}{2}ulp(\hat{x}_0)$ if
the rounding mode is to nearest. The predicate
$prec:\hat{\IR}^2\times\IZ\to\{{\rm\bf true}, {\rm\bf false}\}$
is a test whether the relative error of $[x]$,
$|[x].fl-x|/|x|$ if $x\neq 0$, is bounded by $10^{-p}$. The
semantics reads: If $[x]\in\hat{\IR}^2$ is a finite precision
representation of $x\in\IR$ and $prec([x],p) = {\rm\bf true}$
holds, then $|[x].fl - x|\leq 10^{-p}|x|$ follows.

In the following, some abbreviations are used occasionally.
The floating point numbers and functions are indicated by a
hat: $\hat{x}:=[x].fl$
and $\hat{f}:=[f].fl$. An over-bar indicates an error bound:
$\overline{e}:=[x].err$ and $\overline{erf}:=[f].erf$. Hence,
$[x]$ is equivalent to $(\hat{x},\overline{e})$ and $[f]$
is equivalent to $(\hat{f},\overline{erf})$.

Finally a remark on optimization. The algorithm
is not optimized in the performance. Otherwise, in Line {\tt 10}
something like $m\leftarrow 2m$ should be used. Here, the aim is to
find the minimal $m$ to guarantee some given upper bound on the
relative error of $x_n$.

\subsection{Feasibility and Correctness}
It is clear, that the rounding function $gl$ is computationally
feasible. So lets begin with the predicate $prec$.
\begin{proposition}
The computationally feasible formula
\begin{equation}
\label{alg:prec}
prec((\hat{x},\overline{e}),p):=\begin{cases}
{\rm\bf true} & \text{if }
\overline{e}\leq\frac{10^{-p}}{1+10^{-p}}|\hat{x}|\\
{\rm\bf false} & \text{else}
\end{cases}
\end{equation}
fulfills the above described semantics.
\end{proposition}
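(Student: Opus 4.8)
The plan is to check two things: that the displayed formula uses only computable operations, and that the one-sided implication demanded by the semantics holds. Computational feasibility is immediate: evaluating the right-hand side requires forming $10^{-p}$ (a power with integer exponent $p$ on a fixed base), one division, one multiplication by $|\hat{x}|$, an absolute value, and a comparison, all of which are computable as string functions on $\hat{\IR}$ and $\IZ$ in the sense fixed above. If one wants exact soundness even when the threshold $\frac{10^{-p}}{1+10^{-p}}$ is only representable approximately, it suffices to round that constant toward zero before the comparison; this only shrinks the set of inputs on which $prec$ returns true and hence cannot invalidate the implication proved next.

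The substantive part is that implication. Assume $[x]=(\hat{x},\overline{e})$ is a finite precision representation of $x$, so $|\hat{x}-x|\leq\overline{e}$, and assume $prec([x],p)$ returns true, i.e.\ $\overline{e}\leq\frac{10^{-p}}{1+10^{-p}}|\hat{x}|$. First I would bound $|\hat{x}|$ in terms of $|x|$ via the triangle inequality, $|\hat{x}|\leq|x|+|\hat{x}-x|\leq|x|+\overline{e}$, and substitute this into the hypothesis to get $\overline{e}\leq\frac{10^{-p}}{1+10^{-p}}(|x|+\overline{e})$. Clearing the denominator and cancelling the common $10^{-p}\overline{e}$ term on both sides yields $\overline{e}\leq 10^{-p}|x|$. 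Combining with $|\hat{x}-x|\leq\overline{e}$ gives $|\hat{x}-x|\leq 10^{-p}|x|$, which is exactly the required bound on the relative error.

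Finally I would dispatch the degenerate case $x=0$ without a separate argument: the inequality just derived reads $|\hat{x}-x|\leq 0$, forcing $\hat{x}=x=0$, so the representation is in fact exact and the semantics (which only constrains the relative error for $x\neq 0$) is vacuously satisfied. I do not anticipate a real obstacle; the only place needing a moment's care is that the algebraic step isolating $\overline{e}$ is reversible, which holds because $\frac{10^{-p}}{1+10^{-p}}<1$, and — for the scrupulous — the direction in which the threshold constant must be rounded if its evaluation is itself inexact.
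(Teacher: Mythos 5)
Your proof is correct and arrives at the conclusion by essentially the same algebra as the paper (clear the denominator in $\overline{e}\leq\frac{10^{-p}}{1+10^{-p}}|\hat{x}|$ and cancel the $10^{-p}\overline{e}$ term), but you get there a touch more cleanly. The paper establishes the key intermediate bound $|\hat{x}|-\overline{e}\leq|x|$ only under the auxiliary sign condition $(\hat{x}-\overline{e})(\hat{x}+\overline{e})\geq 0$, and then separately verifies that this sign condition follows from the hypothesis; you observe instead that $|\hat{x}|\leq|x|+\overline{e}$ is just the reverse triangle inequality applied to $|\hat{x}-x|\leq\overline{e}$ and so holds unconditionally. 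Since $|\hat{x}|-\overline{e}\leq|x|$ and $|\hat{x}|\leq|x|+\overline{e}$ are the same inequality, your route shows the paper's sign condition is actually superfluous and can be dropped without loss. Your handling of the $x=0$ case and your remark about rounding the threshold constant toward zero to preserve soundness are both sound and match the spirit of the paper's note following the proposition.
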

\begin{proof}
Let $(\hat{x},\overline{e})$ be a finite precision
representation of $x$. So, if $\overline{e}\leq 10^{-p}|x|$
holds, then also $|\hat{x} - x|\leq 10^{-p}|x|$ holds.
If $(\hat{x}-\overline{e})(\hat{x}+\overline{e})\geq 0$, then
$|\hat{x}|-\overline{e}\leq |x|$ holds. Hence, if
$(\hat{x}-\overline{e})(\hat{x}+\overline{e})\geq 0$ and
$\overline{e}\leq 10^{-p}(|\hat{x}|-\overline{e})$ holds,
then also $|\hat{x} - x|\leq 10^{-p}|x|$. Finally, if
$\overline{e}\leq\frac{10^{-p}}{1+10^{-p}}|\hat{x}|$
holds, then also
$(\hat{x}-\overline{e})(\hat{x}+\overline{e})\geq 0$.

Formula \ref{alg:prec} only uses the accessible floating point
values $\hat{x}$ and $\overline{e}$, basic arithmetics and
finite tests. Hence, this formula is computationally feasible.
\end{proof}
Note that the definition of the predicate this way
also gives ${\rm\bf true}$ in the singular case where $\hat{x}=0$
and $\overline{e}=0$ and hence $x=0$.

An algorithm for computing $\hat{f}$ is by assumption possible.
To derive an algorithm for computing $\overline{erf}$ on the absolute
error, return to Equations \ref{main_err} and \ref{calc_err}.
\begin{proposition}
\label{prop:ubound}
Assume that $\hat{f}(\hat{x})$ computes $f(\hat{x})$ up to
a correctly rounded last bit in mantissa according to rounding convention.
Then there exists a constant $K>0$ such that the absolute error of
$f(x)$ of the computation $[f]([x])$ is bounded from above by
\begin{equation}
L(\hat{x},\overline{e})\cdot\overline{e} +
\frac{K2^{-m}}{1-K2^{-m}}|\hat{f}(\hat{x})|
\end{equation}
if $K2^{-m}<1$. Here,
$L(\hat{x},\overline{e}):=\sup(|f'([\hat{x}-\overline{e},\hat{x}+
\overline{e}])|)$ and $m$ is the mantissa length of $\hat{x}$: $\hat{x}.m$.

Furthermore, this bound is computable.
\end{proposition}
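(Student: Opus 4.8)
The plan is to estimate $\left|[f]([x]).fl-f(x)\right|=\left|\hat f(\hat x)-f(x)\right|$, where $[x]=(\hat x,\overline e)$ is a finite precision representation of $x$ (so $|\hat x-x|\le\overline e$) and $m=\hat x.m$, by splitting it exactly as in Equation~\ref{main_err}: write $\hat f(\hat x)-f(x)=\big(f(\hat x)-f(x)\big)+\big(\hat f(\hat x)-f(\hat x)\big)$, bound the error-propagation term and the newly produced roundoff term separately, and add. The definition of $\overline{erf}$ will then simply be the resulting explicit expression.

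For the first term I would invoke the mean value theorem. Since $f$ is continuously differentiable on the interior of $D$ and $f'$ is bounded, $|f(\hat x)-f(x)|=|f'(\xi)|\cdot|\hat x-x|$ for some $\xi$ between $\hat x$ and $x$; because $|\hat x-x|\le\overline e$ we have $\xi\in[\hat x-\overline e,\hat x+\overline e]$, hence $|f'(\xi)|\le L(\hat x,\overline e)$, giving $|f(\hat x)-f(x)|\le L(\hat x,\overline e)\cdot\overline e$. (If $\hat x$ or the enclosing interval pokes slightly past $\partial D$ one intersects with the domain of $f'$ and uses continuity of $f$ at the endpoints; nothing changes.)

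For the second term I would use the standard floating point error analysis quoted before Equation~\ref{calc_err} (\cite{wi63}): under the hypothesis that the evaluation of $\hat f$ is carried out with correct rounding, there is a constant $K>0$ (absorbing the operation count and the $1.06$ factor) with $|\hat f(\hat x)-f(\hat x)|\le K2^{-m}|f(\hat x)|$, valid once $m$ is large enough. The only manipulation needed is to re-express this via the available quantity $|\hat f(\hat x)|$ instead of $|f(\hat x)|$: from $|f(\hat x)|\le|\hat f(\hat x)|+K2^{-m}|f(\hat x)|$ and $K2^{-m}<1$ one gets $|f(\hat x)|\le|\hat f(\hat x)|/(1-K2^{-m})$, hence $|\hat f(\hat x)-f(\hat x)|\le\frac{K2^{-m}}{1-K2^{-m}}|\hat f(\hat x)|$. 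Adding the two bounds yields the claimed estimate. Computability is then immediate: $K$ is a fixed integer attached to $f$, $2^{-m}$ is computable from $\hat x.m$, $|\hat f(\hat x)|$ is at hand, and $L(\hat x,\overline e)$ is obtained by evaluating the computationally feasible $[f']$ in interval arithmetic on $[\hat x-\overline e,\hat x+\overline e]$ and taking (an upper bound on) the sup of absolute values — since only an upper bound is asserted, rounding $L$ upward is harmless.

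The step I expect to need the most care is the second term. The constant in the Wilkinson-type bound is only valid when $2^{-m}$ is sufficiently small (the earlier discussion required $K\le 0.1\cdot2^m$), so the hypothesis ``$K2^{-m}<1$'' must really be understood together with ``$m$ large enough for the backward error analysis to apply'', and the several ``sufficiently small'' thresholds have to be folded cleanly into the single stated condition after rescaling $K$. This bookkeeping — rather than any of the estimates themselves — is the delicate part; everything else is routine.
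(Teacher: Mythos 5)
Your proposal is correct and follows essentially the same path as the paper's proof: split the error via Equation~\ref{main_err}, bound the propagation term by the mean value theorem to get $L(\hat x,\overline e)\cdot\overline e$, and eliminate the inaccessible $|f(\hat x)|$ from the Wilkinson-type roundoff bound in favor of $|\hat f(\hat x)|$ using $K2^{-m}<1$, with computability of $L$ via interval arithmetic. The only cosmetic difference is that you reach $|f(\hat x)|\le|\hat f(\hat x)|/(1-K2^{-m})$ by a triangle-inequality rearrangement, whereas the paper parametrizes $\hat f(\hat x)-f(\hat x)=\delta f(\hat x)$ and solves $f(\hat x)=\hat f(\hat x)/(1+\delta)$; both manipulations give identical bounds.
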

\begin{proof}
Using Equation \ref{main_err} and following the calculations leading to
equation \ref{calc_err},
$|\hat{f}(\hat{x})-f(x)|\leq L(\hat{x},\overline{e})\cdot\overline{e}+
|\hat{f}(\hat{x})-f(\hat{x})|$ follows. According to the assumption
on $\hat{f}$,
$|\hat{f}(\hat{x})-f(\hat{x})|\leq K2^{-m}|f(\hat{x})|$ holds, with
a value $K\in\{1,2\}$ depending on the rounding convention. However,
$f(\hat{x})$ is unknown, only $\hat{f}(\hat{x})$ is accessible.
To overcome this, set $\hat{f}(\hat{x})-f(\hat{x})=\delta f(\hat{x})$
with $|\delta|\leq K 2^{-m}$. Since $|\delta|<1$ holds, resolve to
$f(\hat{x})=\frac{1}{1+\delta}\hat{f}(\hat{x})$. Hence,
\begin{equation*}
|\hat{f}(\hat{x})-f(\hat{x})|=\left|\frac{\delta}{1+\delta}\right|\cdot
|\hat{f}(\hat{x})|\leq\frac{K2^{-m}}{1-K2^{-m}}|\hat{f}(\hat{x})|
\end{equation*}
follows. Since an upper bound on $L(\hat{x},\overline{e})$ can be
computed using global optimization techniques, e.g.\ with interval
arithmetic, the above described bound is computable.
\end{proof}

To summarize, the mathematical iteration (\ref{main_it}) is performed
in the algorithm by iterating a value $\hat{x}_n$ approximating $x_n$
with an upper bound on its absolute error $\overline{e}_n$ according to
\begin{align}
\label{comp_rec1}
\hat{x}_{n+1} & =\hat{f}(\hat{x}_n) & \hat{x}_0 & =gl(x_0,m)\\
\label{comp_rec2}
\overline{e}_{n+1} & =\overline{L}(\hat{x}_n,\overline{e}_n)\overline{e}_n +
\frac{K2^{-m}}{1-K2^{-m}}|\hat{x}_{n+1}| &
 \overline{e}_0 & =\frac{K2^{-m}}{1-K2^{-m}}|\hat{x}_0|
\end{align}
where $\overline{L}(\hat{x}_n,\overline{e}_n)$ is computable
upper bound on $L(\hat{x}_n,\overline{e}_n)$ as described in the
preceding proposition.
This is Line {\tt 9} in the inner {\tt for}-loop of the algorithm
which is executed with successively increasing mantissa length $m$,
controlled by the outer {\tt do-while}-loop. Finally, it has to
be shown that this outer loop eventually terminates. Therefore,
two more propositions are needed.
\begin{proposition}
Let $x\neq 0$ be a real number and $([x]_m)_{m\geq m_0}$
a sequence of finite precision representations of $x$ with
increasing mantissa length $([x]_m).fl.m \geq m$ such that
$\lim_{m\to\infty}([x]_m).err=0$ holds and consequently
$\lim_{m\to\infty}([x]_m).fl=x$. Then
$\lim_{m\to\infty}prec([x]_m,p)={\bf true}$ follows for
all $p\in\IZ$.
\end{proposition}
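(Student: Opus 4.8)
The plan is to argue directly from the explicit formula for $prec$ given in Equation \ref{alg:prec}. Write $\hat{x}_m:=([x]_m).fl$ and $\overline{e}_m:=([x]_m).err$, and set $c_p:=\frac{10^{-p}}{1+10^{-p}}$, so that $prec([x]_m,p)={\bf true}$ exactly when $\overline{e}_m\leq c_p\,|\hat{x}_m|$. Since $p\in\IZ$ is fixed, $c_p$ is a fixed positive constant. It therefore suffices to show that the left-hand side $\overline{e}_m$ tends to $0$ while the right-hand side $c_p|\hat{x}_m|$ stays bounded away from $0$; then for all sufficiently large $m$ the inequality holds, and since this is a tail property the claimed limit follows.

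The first part is immediate from the hypothesis $\lim_{m\to\infty}\overline{e}_m=0$. For the second part, the hypothesis also gives $\lim_{m\to\infty}\hat{x}_m=x$ (this is the ``consequently'' in the statement, following from $|\hat{x}_m-x|\leq\overline{e}_m\to 0$), hence $\lim_{m\to\infty}|\hat{x}_m|=|x|$. Because $x\neq 0$ we have $|x|>0$, so there is an index $m_1\geq m_0$ with $|\hat{x}_m|\geq\tfrac{1}{2}|x|>0$ for all $m\geq m_1$; consequently $c_p|\hat{x}_m|\geq\tfrac{1}{2}c_p|x|>0$ for $m\geq m_1$.

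Finally, choose $m_2\geq m_0$ with $\overline{e}_m\leq\tfrac{1}{2}c_p|x|$ for all $m\geq m_2$, which is possible since $\overline{e}_m\to 0$ and $\tfrac{1}{2}c_p|x|$ is a fixed positive number. Then for every $m\geq\max(m_1,m_2)$ one has $\overline{e}_m\leq\tfrac{1}{2}c_p|x|\leq c_p|\hat{x}_m|$, so $prec([x]_m,p)={\bf true}$ by Equation \ref{alg:prec}, which establishes $\lim_{m\to\infty}prec([x]_m,p)={\bf true}$.

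As for the main obstacle: there is none of substance — this is a routine ``$\varepsilon$ versus a fixed positive threshold'' argument. The only point that must not be overlooked is that the hypothesis $x\neq 0$ is precisely what keeps $|\hat{x}_m|$, and hence the threshold $c_p|\hat{x}_m|$, bounded away from zero; without it the right-hand side could shrink as fast as the left-hand side and the predicate need not become true. The degenerate case $x=0$ is excluded here and was already addressed by the remark following Equation \ref{alg:prec}.
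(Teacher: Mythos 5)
Your proof is correct and essentially identical to the paper's: both fix $c_p=\frac{10^{-p}}{1+10^{-p}}$, use $x\neq 0$ to get $|\hat{x}_m|\geq\frac{1}{2}|x|$ eventually, and combine this with $\overline{e}_m\leq\frac{1}{2}c_p|x|$ eventually to satisfy the test in Equation \ref{alg:prec}. The only cosmetic difference is that you pick two indices $m_1,m_2$ and take their maximum, whereas the paper picks a single $M$ at which both conditions already hold.
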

\begin{proof}
Since $x\neq 0$ there
exists some $M\in\IN$ such that $0<\frac{1}{2}|x|\leq|([x]_m).fl|$
and $([x]_m).err\leq\frac{10^{-p}}{2(1+10^{-p})}|x|$ holds for all
$m\geq M$. Then, $prec([x]_m,p)={\bf true}$ holds for all
$m\geq M$.
\end{proof}
The next proposition makes the link to Line {\tt 9}
in the algorithm.
\begin{proposition}
\label{prop:alg_halt}
Let $x_n$ be the $n$-th element of the orbit of Equation
\ref{main_it} and $([x_n]_m)_{m\geq m_0}$ a sequence given
according to  the recursion equations (\ref{comp_rec1}) and
(\ref{comp_rec2}) with increasing mantissa length
$([x_n]_m).fl.m \geq m$. Then
$\lim_{m\to\infty}([x_n]_m).err=0$ holds and consequently
$\lim_{m\to\infty}([x_n]_m).fl=x_n$.
\end{proposition}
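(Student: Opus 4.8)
The plan is to proceed by induction on $n$, using only continuity of $f$ on $D$, boundedness of $f'$ on the interior of $D$ (say $|f'|\le B$ there), compactness of $D$, and the elementary fact that $2^{-m}\to 0$ as the mantissa length grows. For the base case $n=0$, I would note that $\hat{x}_0=gl(x_0,m')$ with mantissa length $m'\ge m$, so the guaranteed rounding error $gl(x_0,m').err$ tends to $0$ as $m\to\infty$; hence $([x_0]_m).fl\to x_0$. Since $x_0\in D$ and $D$ is compact, the values $\hat{x}_0$ are bounded, so $\overline{e}_0=\frac{K2^{-m'}}{1-K2^{-m'}}|\hat{x}_0|\to 0$ as well.

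For the inductive step I assume $([x_n]_m).fl\to x_n$ and $([x_n]_m).err\to 0$ and first treat the value. By the triangle inequality and the rounding estimate from Proposition \ref{prop:ubound}, \[ |\hat{x}_{n+1}-x_{n+1}|=|\hat{f}(\hat{x}_n)-f(x_n)|\le |\hat{f}(\hat{x}_n)-f(\hat{x}_n)|+|f(\hat{x}_n)-f(x_n)|\le K2^{-m}|f(\hat{x}_n)|+|f(\hat{x}_n)-f(x_n)|. \] The first summand tends to $0$ because $f$ is bounded on $D$ and $m\to\infty$; the second tends to $0$ by continuity of $f$ at $x_n$ together with $\hat{x}_n\to x_n$. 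Hence $([x_{n+1}]_m).fl=\hat{x}_{n+1}\to f(x_n)=x_{n+1}$.

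For the error, from (\ref{comp_rec2}) one has \[ \overline{e}_{n+1}=\overline{L}(\hat{x}_n,\overline{e}_n)\,\overline{e}_n+\frac{K2^{-m}}{1-K2^{-m}}|\hat{x}_{n+1}|. \] The second summand tends to $0$ since $\hat{x}_{n+1}$ converges, hence is bounded, and $2^{-m}\to 0$. For the first summand, $\overline{e}_n\to 0$ by the inductive hypothesis, so it suffices to know that $\overline{L}(\hat{x}_n,\overline{e}_n)$ stays bounded: as $m\to\infty$ the interval $[\hat{x}_n-\overline{e}_n,\hat{x}_n+\overline{e}_n]$ contracts to $x_n\in D$, so for all large $m$ the exact quantity $L(\hat{x}_n,\overline{e}_n)$ is at most $B$, and a computable upper bound $\overline{L}$ obtained by interval arithmetic can be taken eventually bounded too (indeed $\overline{L}(\hat{x}_n,\overline{e}_n)\to|f'(x_n)|$). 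Thus $\overline{L}(\hat{x}_n,\overline{e}_n)\,\overline{e}_n\to 0$ and $\overline{e}_{n+1}\to 0$, which closes the induction; the statement $\lim_m([x_n]_m).fl=x_n$ then follows from $\lim_m([x_n]_m).err=0$ and $|([x_n]_m).fl-x_n|\le([x_n]_m).err$.

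The step I expect to be the main obstacle is the behaviour at the boundary of $D$: when $x_n\in\partial D$, for finite $m$ the point $\hat{x}_n$ and its surrounding error interval may stick out of $D$, where $f$, $f'$, and hence $\hat{f}$ and $\overline{L}$, need not be defined. I would handle this either via the implicit assumption that $[f].fl$ and the interval evaluation of $f'$ extend continuously to a neighbourhood of $D$, or by arguing with one-sided limits at boundary points; in both cases the uniform bound $|f'|\le B$ keeps $\overline{L}$ under control. A related minor point is that $\overline{L}$ need only be \emph{some} computable upper bound on $L$, so the argument relies on the mild assumption — satisfied by natural interval extensions — that $\overline{L}$ does not blow up as the input interval degenerates to a point.
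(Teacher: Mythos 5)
Your proof is correct, but it takes a genuinely different route from the paper's. The paper does not induct on $n$: it replaces the local bound $\overline{L}(\hat{x}_n,\overline{e}_n)$ by a single computable global constant $\overline{L}\geq\sup|f'(D)|$ and takes $\overline{M}\geq\sup\{|x|:x\in D\}$, then unrolls the linear recursion (\ref{comp_rec2}) to the closed form $\overline{e}_n\leq\frac{K2^{-m}}{1-K2^{-m}}\overline{M}\sum_{k=0}^{n}\overline{L}^k$; for fixed $n$ this is a constant multiple of $\frac{K2^{-m}}{1-K2^{-m}}$, so it visibly tends to $0$ as $m\to\infty$. That argument is shorter because it never has to discuss how $\overline{L}(\hat{x}_n,\overline{e}_n)$ behaves as the interval degenerates: the global constant dominates it once and for all (implicitly assuming the interval is intersected with $D$, or that the interval extension of $f'$ is bounded near $D$, which is exactly the point you flag). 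Your induction buys a more refined picture -- you show $\overline{L}(\hat{x}_n,\overline{e}_n)\to|f'(x_n)|$ and track the actual convergence mechanism step by step -- at the cost of having to address the boundary/degeneracy behaviour of $\overline{L}$ explicitly. One small redundancy: your separate proof that $\hat{x}_{n+1}\to x_{n+1}$ via continuity of $f$ is not needed, since, as you yourself note at the end, $|\hat{x}_n-x_n|\leq\overline{e}_n$ is part of the finite precision representation invariant, so the value convergence is an immediate consequence of $\overline{e}_n\to 0$; this is exactly how the paper phrases it (``and consequently'').
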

\begin{proof}
Let  $L:=\sup(f'(D))$ and $\overline{L}\geq L$ be some computationally
accessible value using some global optimization technique. Then
Equation \ref{comp_rec2} leads to 
$\overline{e}_{n+1}\leq \overline{L}\overline{e}_n +
\frac{K2^{-m}}{1-K2^{-m}}\overline{M}$ where
$\overline{M}\geq\sup\{|x| : x\in D\}$ such that
$|\hat{x}_n|\leq\overline{M}$ holds for all $n$.
Iteration gives
$\overline{e}_n\leq\overline{L}^n\overline{e}_0+
\frac{K2^{-m}}{1-K2^{-m}}\overline{M}
\sum^{n-1}_{k=0}\overline{L}^k \leq
\frac{K2^{-m}}{1-K2^{-m}}\overline{M}
\sum^n_{k=0}\overline{L}^k$.
Hence, for $n$ fixed, $\lim_{m\to\infty}([x_n]_m).err=0$ follows.
\end{proof}
These two propositions finish the correctness proof of the
algorithm. They show that, if $x_n\neq 0$ for $n=0,\dots,N$,
the outer loop eventually terminates for any $p\in\IZ$.

\subsection{Computational Complexity}
After having presented the preliminary work, the main issue of the
paper is addressed - the computational complexity of the presented
algorithm. The complexity measure of interest here is the loss of
significance rate already introduced informally in the last section.
Here is the formal definition.
\begin{definition}
\label{def:losr}
The minimal mantissa length, for which the described algorithm
eventually halts is denoted by $m_{min}(x_0,N,p)$, where $x_0$, $N$
and $p$ are the corresponding input parameters. Then, the
{\em loss of significance rate}
$\sigma:\hat{\IR}\cap D\times\IZ\to\IR$ is defined by
\begin{equation}
\sigma(x,p):=\limsup_{N\to\infty}\frac{m_{min}(x,N,p)}{N}.
\end{equation}
\end{definition}
However, to achieve bounds on the loss of significance rate,
a technical difficulty has to be
circumvented. Therefore, one more assumption on the dynamical system
$(D,f)$, additional to the ones already mentioned in the beginning
of this section, has to be made.
\begin{assumption}
\label{ass:main}
The dynamical system $(D,f)$ is assumed to have the properties
already mentioned in the beginning of this section and
additionally $0\not\in D$.
\end{assumption}
It was already seen in the last subsection that $x_n=0$
makes difficulties such that it cannot be proven that
the algorithm eventually halts. However,
the restriction $0\not\in D$ is no loss of generality. If all
other conditions are fulfilled except that $D$
contains zero, consider the following dynamical system
$(\tilde{D},\tilde{f})$ instead. Choose some $M>\min(D)$ and
set $\tilde{D}:=\{x+M \mid x\in D\}$ as well as
$\tilde{f}(x):=f(x-M)+M$ for all $x\in\tilde{D}$. Then
$(\tilde{D},\tilde{f})$ fulfills all required properties.
Furthermore $\tilde{f}'(x)=f'(x-M)$ holds and therefore there
is no substantial difference in the complexity analysis of the
algorithm between the original system and the modified system.

First, the boundedness of $\sigma(x)$ is shown.
\begin{proposition}
Let $(D,f)$ be as in Assumption \ref{ass:main} and $m_{min}(x_0,N,p)$
as in Definition \ref{def:losr}. Then, for given $p\in\IZ$, there exist
some $C_1, C_2\geq 0$, dependent of $f$, such that
$m_{min}(x_0,N,p)\leq C_1 N + C_2$ holds for all
$N\in\IN$, $x\in\hat{\IR}\cap D$.
\end{proposition}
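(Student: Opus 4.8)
The plan is to exhibit, for the fixed precision $p$, an explicit mantissa length $C_1N+C_2$ that already suffices for the algorithm to stop, and then read off $C_1$ and $C_2$ from the estimates of this section.

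First I would fix the relevant constants. Since $0\notin D$ by Assumption~\ref{ass:main} and $D$ is compact, $\delta:=\min\{|x|:x\in D\}$ is strictly positive. I would then pick $\overline{L}\geq\sup(f'(D))$ and $\overline{M}\geq\sup\{|x|:x\in D\}$ exactly as in the proof of Proposition~\ref{prop:alg_halt}, so that $|\hat{x}_n|\leq\overline{M}$ for all $n$, and enlarge $\overline{L}$ to $\max(\overline{L},2)$ so that $\sum_{k=0}^{n}\overline{L}^k=\frac{\overline{L}^{n+1}-1}{\overline{L}-1}\leq\overline{L}^{n+1}$. Finally I would set the $p$-dependent constant $\epsilon_p:=\frac{10^{-p}}{1+10^{-p}}\cdot\frac{\delta}{2}>0$; note that it depends only on $f$ and $p$, not on the initial value.

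The core of the argument is a sufficient condition for the outer loop to terminate: if the mantissa length $m$ satisfies $K2^{-m}\leq\frac12$ and $\overline{e}_n\leq\epsilon_p$ for $n=0,\dots,N+1$, then the run at length $m$ halts. The key step is that $\overline{e}_n\leq\epsilon_p$ forces $prec([x_n],p)={\bf true}$: since $\frac{10^{-p}}{1+10^{-p}}<1$ we get $\overline{e}_n<\frac{\delta}{2}$, hence $|\hat{x}_n|\geq|x_n|-\overline{e}_n>\frac{\delta}{2}$ using $|x_n|\geq\delta$, so $\overline{e}_n\leq\epsilon_p=\frac{10^{-p}}{1+10^{-p}}\cdot\frac{\delta}{2}\leq\frac{10^{-p}}{1+10^{-p}}|\hat{x}_n|$, which is exactly the condition defining $prec$ in~\eqref{alg:prec}. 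Consequently the inner {\tt for}-loop never breaks, all of $x_0,\dots,x_N$ are printed, and the test in Line~{\tt 12} applied to $[x_{N+1}]$ fails, so the outer {\tt do}-loop halts; thus $m_{min}(x_0,N,p)\leq m+1$ for every such $m$. This is precisely where $0\notin D$ enters: it is what makes the uniform lower bound $|\hat{x}_n|\geq\delta/2$ available, and without it one cannot expect a linear bound at all.

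It then remains to turn the estimate of Proposition~\ref{prop:alg_halt} into a bound on $m$. Using that proposition together with $K2^{-m}\leq\frac12$ (so $\frac{1}{1-K2^{-m}}\leq 2$) and $\sum_{k=0}^{n}\overline{L}^k\leq\overline{L}^{n+1}$, one obtains for $0\leq n\leq N+1$
\begin{equation*}
\overline{e}_n\;\leq\;\frac{K2^{-m}}{1-K2^{-m}}\,\overline{M}\sum_{k=0}^{n}\overline{L}^k\;\leq\;2K\overline{M}\,\overline{L}^{\,N+2}\,2^{-m},
\end{equation*}
so $\overline{e}_n\leq\epsilon_p$ for all $n\leq N+1$ as soon as $m\geq(N+2)\,\ld(\overline{L})+\ld\!\big(2K\overline{M}/\epsilon_p\big)$. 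Combining this with the standing requirements $m\geq m_0$ and $K2^{-m}\leq\frac12$ (the latter a constant lower bound on $m$), the algorithm halts for every $m\geq C_1N+C_2$, where $C_1:=\ld(\overline{L})\geq 0$ and $C_2\geq 0$ is a constant assembled from $m_0$, $K$, $\overline{M}$, $\overline{L}$ and $\epsilon_p$ (also absorbing the ceiling and the $+1$ above); hence $m_{min}(x_0,N,p)\leq C_1N+C_2$. Since $C_1$ and $C_2$ depend only on $f$ and the fixed $p$ and not on $x_0$, the bound is uniform over $x\in\hat{\IR}\cap D$. I expect the only fiddly point to be the bookkeeping around the outer loop (the increment in Line~{\tt 11} and the extra test on $[x_{N+1}]$), which merely changes the additive constant; the genuinely essential ingredient is the use of $0\notin D$ to bound $|\hat{x}_n|$ away from zero uniformly in $n$.
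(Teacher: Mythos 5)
Your proof is correct and follows essentially the same route as the paper's: iterate the error recursion from Proposition~\ref{prop:alg_halt} to a geometric bound $\overline{e}_n\lesssim 2^{-m}\overline{L}^{\,n+1}$, divide by a uniform lower bound on $|\hat{x}_n|$ furnished by $0\notin D$, and solve the $prec$ inequality for $m$ to obtain a bound linear in $N$ with slope $\ld(\overline{L})$. The only real difference is one of care: the paper simply asserts the existence of $B>0$ with $B\leq|\hat{x}_n|$ for all $n$, whereas you derive it by bootstrapping from $\overline{e}_n\leq\epsilon_p<\delta/2$ (and you also track the loop-boundary detail at $n=N+1$), both of which only perturb the additive constant and make the argument a bit tighter.
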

\begin{proof}
According to the requirements made on $(D,f)$, there are some constants
$L>0$ and $M>0$ such that
$\overline{e}_{n+1}\leq L\overline{e}_n+\frac{K2^{-m}}{1-K2^{-m}} M$
holds for all $n\in\IN$ and all mantissa lengths $m$. Without loss
of generality assume $L\neq 1$, otherwise set $L>1$.
Analogous to the treatment in the proof of Proposition
\ref{prop:alg_halt}, iteration gives 
$\overline{e}_N\leq \frac{K2^{-m}}{1-K2^{-m}}M\sum^N_{n=0}L^n=
\frac{K2^{-m}}{1-K2^{-m}}M\frac{L^{N+1}-1}{L-1}$.
Since there exists some $B>0$ with $B\leq|\hat{x}_n|$ for all $n$,
$\overline{e}_N/|\hat{x}_N|\leq\overline{e}_N/B\leq C2^{-m}L^{N+1}$
follows with $C:=MK/(B(1-K2^{-m_0})(L-1))$ where $m_0$ is the initial
mantissa length, Line {\tt 2} in the algorithm. Then, if
$C2^{-m}L^{N+1}\leq \frac{10^{-p}}{1+10^{-p}}$ holds,
$prec((\hat{x}_n,\overline{e}_n),p)={\rm\bf true}$ for all
$n=0,\dots,N$. This leads to
$m_{min}(x_0,N,p)\leq\\max(0,ld(L))N +
max(m_0, \ld(L)+\ld(C) + p\cdot\ld(10) + \ld(1+10^{-p}))$.
\end{proof}
\begin{corollary}
Let $(D,f)$ be as in Assumption \ref{ass:main} and $\sigma(x,p)$ the loss
of significance rate. Then, for given $p\in\IZ$, there exists some
constant $C\geq 0$ such that $\sigma(x,p)\leq C$ holds for all
$x\in\hat{\IR}\cap D$.
\end{corollary}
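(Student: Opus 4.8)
The plan is to read the statement off directly from the preceding Proposition, which already does all the work. That result supplies constants $C_1, C_2 \geq 0$ depending only on $f$ (and the fixed precision $p$), but neither on the initial value $x$ nor on the iteration length $N$, such that
\begin{equation*}
m_{min}(x,N,p) \leq C_1 N + C_2 \qquad \text{for all } N \in \IN,\ x \in \hat{\IR}\cap D .
\end{equation*}
Note first that under Assumption~\ref{ass:main} the orbit stays in $D$ and hence $x_n \neq 0$ for all $n$, so by the correctness propositions the algorithm does halt for every $N$ and $m_{min}(x,N,p)$ is a well-defined finite natural number; in particular the $\limsup$ in Definition~\ref{def:losr} is taken over a sequence of real numbers.

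Dividing the linear bound by $N$ gives $m_{min}(x,N,p)/N \leq C_1 + C_2/N$ for every $N \geq 1$. Passing to the limit superior as $N \to \infty$, the term $C_2/N$ vanishes, so
\begin{equation*}
\sigma(x,p) = \limsup_{N\to\infty}\frac{m_{min}(x,N,p)}{N} \leq C_1 .
\end{equation*}
Setting $C := C_1$ proves the corollary. Since $C_1$ was produced independently of $x$ in the preceding Proposition, the bound holds uniformly for all $x \in \hat{\IR}\cap D$, which is exactly the assertion.

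The only point worth a second glance is the $x$-uniformity of $C_1$, and this is already built into the Proposition: there $C_1 = \max(0,\ld(L))$ with $L \geq \sup(f'(D))$, a quantity depending on $f$ alone, while the lower bound $B \leq |\hat{x}_n|$ and the upper envelope $M \geq \sup\{|y| : y \in D\}$ used in that proof are uniform thanks to compactness of $D$ and $0 \notin D$ (Assumption~\ref{ass:main}), so that $B := \inf\{|y| : y \in D\} > 0$ and $M := \sup\{|y| : y \in D\}$ work for every starting point. Consequently no constant in the estimate depends on $x$, and there is no genuinely hard step: the corollary is a direct quantitative consequence of the already-established linear growth bound on $m_{min}$.
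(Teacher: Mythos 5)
Your proof is correct and is exactly the intended derivation: divide the linear bound $m_{min}(x,N,p)\leq C_1 N + C_2$ from the preceding Proposition by $N$, let $N\to\infty$ in the $\limsup$, and observe $C_2/N\to 0$, giving $\sigma(x,p)\leq C_1$. The paper leaves the Corollary without an explicit proof because it follows immediately from that Proposition, and your added remarks on well-definedness of $m_{min}$ (via $0\notin D$ and the halting propositions) and on the $x$-uniformity of the constants are sound, if not strictly necessary.
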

The treatment has now come to a stage that the main statements of
this paper can be formulated. A lower and an upper bound for the
loss of significance rate is given. Furthermore, these bounds
are strongly related to the Ljapunow exponent $\lambda(x)$ defined
in the previous section.
\begin{theorem}
\label{thm:main}
Let $(D,f)$ be as in Assumption \ref{ass:main}, $\sigma(x,p)$ the loss
of significance rate and $\lambda(x)$ the Ljapunow exponent of
$(D,f)$. Then $\sigma(x,p)\geq\lambda(x)/ln(2)$ holds for all
$x\in\hat{\IR}\cap D$, $p\in\IZ$ if $\lambda(x)$ exists.
\end{theorem}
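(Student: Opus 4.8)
The plan is to read off a lower bound for $m_{min}(x,N,p)$ directly from the error recursion (\ref{comp_rec2}), then divide by $N$ and pass to the $\limsup$. The only structural input I need is that, at the mantissa length at which the algorithm halts, the pair $(\hat x_n,\overline e_n)$ produced by (\ref{comp_rec1})--(\ref{comp_rec2}) is a genuine finite precision representation of the \emph{true} orbit point $x_n=f^n(x)$ for every $n\le N$ — this is Proposition \ref{prop:ubound} applied inductively. Hence $x_n\in[\hat x_n-\overline e_n,\hat x_n+\overline e_n]$, and therefore $\overline L(\hat x_n,\overline e_n)=\sup(|f'([\hat x_n-\overline e_n,\hat x_n+\overline e_n])|)\geq|f'(x_n)|$. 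This single inequality is what lets the bound register the true Lyapunov growth rate rather than the growth rate along the perturbed computed orbit.

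Concretely: since $K2^{-m}<1$ for the mantissa lengths in play, the additive roundoff term in (\ref{comp_rec2}) is nonnegative, so $\overline e_{n+1}\geq\overline L(\hat x_n,\overline e_n)\,\overline e_n\geq|f'(f^n(x))|\,\overline e_n$, and iterating from $0$ to $N-1$ gives $\overline e_N\geq\overline e_0\prod_{k=0}^{N-1}|f'(f^k(x))|$. For the base case, $\overline e_0=\frac{K2^{-m}}{1-K2^{-m}}|\hat x_0|\geq 2^{-m}|\hat x_0|\geq 2^{-m}B$, where $B>0$ is a lower bound for $|\hat x_0|$ valid for all sufficiently large $m$ — such a $B$ exists because $0\notin D$ (Assumption \ref{ass:main}) and $x\in\hat{\IR}\cap D$. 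On the other side, when the outer loop terminates with $m=m_{min}(x,N,p)$ the predicate $prec$ is {\bf true} at step $N$, so by (\ref{alg:prec}), $\overline e_N\leq\frac{10^{-p}}{1+10^{-p}}|\hat x_N|\leq A$, where $A<\infty$ is a uniform bound on that quantity (finite because the computed iterates stay in a fixed bounded set, exactly as in the proof of Proposition \ref{prop:alg_halt}). Chaining the two estimates and taking logarithms to base $2$,
\[
m_{min}(x,N,p)\ \geq\ \ld(B/A)\ +\ \sum_{k=0}^{N-1}\ld\left|f'(f^k(x))\right|\ =\ \ld(B/A)\ +\ \frac{1}{\ln 2}\sum_{k=0}^{N-1}\ln\left|f'(f^k(x))\right|.
\]

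Dividing by $N$ and letting $N\to\infty$ finishes it: the constant term contributes $0$, and since $\lambda(x)$ is assumed to exist, $\frac1N\sum_{k=0}^{N-1}\ln|f'(f^k(x))|\to\lambda(x)$ by Definition \ref{def:ljap}, so the right-hand side over $N$ tends to $\lambda(x)/\ln 2$; as $m_{min}(x,N,p)/N$ dominates it for every $N$, we obtain $\sigma(x,p)=\limsup_{N\to\infty}m_{min}(x,N,p)/N\geq\lambda(x)/\ln 2$ (and when $\lambda(x)\leq 0$ the claim is trivial since $\sigma(x,p)\geq 0$). The step I expect to carry the weight is the inequality $\overline L(\hat x_n,\overline e_n)\geq|f'(f^n(x))|$, which hinges precisely on the invariant of Proposition \ref{prop:ubound} that the error interval always traps the exact iterate. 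The remainder is bookkeeping — keeping $B$ and $A$ uniform in $m$ and $N$, and observing that possible degeneracies of $f'$ (a vanishing derivative, or the orbit touching $\partial D$) are harmless because the hypothesis that $\lambda(x)$ exists already confines the orbit to the interior of $D$ and makes $\sum_k\ln|f'(f^k(x))|$ well defined.
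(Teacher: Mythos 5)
Your proof follows essentially the same route as the paper's: you extract $\overline e_{n+1}\geq|f'(x_n)|\overline e_n$ from the recursion (\ref{comp_rec2}) via the invariant that $(\hat x_n,\overline e_n)$ traps $x_n$ (so $\overline L(\hat x_n,\overline e_n)\geq|f'(x_n)|$), iterate to $\overline e_N\geq\overline e_0\prod_{k<N}|f'(x_k)|$, combine with the termination test $prec$ to read off a lower bound on $m_{min}(x,N,p)$, and divide by $N$. This matches the paper's argument step for step; you are merely more explicit about why Proposition \ref{prop:ubound}'s conclusion forces $\overline L\geq|f'(x_n)|$, which the paper leaves implicit.
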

\begin{proof}
First there are two constants $B,M > 0$ such that $|\hat{x}_n|\geq B$
and $|\hat{x}_n|\leq M$ holds for all $n$.
According to Equation \ref{comp_rec2} and Proposition
\ref{prop:ubound}, $\overline{e}_{n+1}\geq|f'(x_n)|\overline{e}_n$
holds. Iteration gives
$\overline{e}_N\geq\frac{BK2^{-m}}{1-K2^{-m}}\prod^{N-1}_{n=0}|f'(x_n)|$.
Hence, $\frac{\overline{e}_N}{|\hat{x}_N|}\geq
\frac{BK2^{-m}}{M(1-K2^{-m})}\prod^{N-1}_{n=0}|f'(x_n)|$ follows.
A necessary condition for the algorithm to terminate is therefore
$\frac{BK}{M}2^{-m}\prod^{N-1}_{n=0}|f'(x_n)|\leq
\frac{10^{-p}}{1+10^{-p}}$ which gives
$m_{min}(x_0,N,p)\geq\sum^{N-1}_{n=0}\ld(|f'(x_k)|)+p\cdot\ld(10)+\ld(\frac{BK}{M})+
\ld(1+10^{-p})$. Following the definitions of the loss of significance rate
and the Ljapunow exponent, $\sigma(x_0,p)\geq\lambda(x_0)/\ln(2)$ follows.
\end{proof}

Before a realistic upper bound on the loss of significance rate
can be presented, one more definition is needed.
\begin{definition}
Let $\alpha>0$ then define a function $\eta_\alpha:(0,\infty)\to\IR$ by
\begin{equation*}
\eta_\alpha(x):=\begin{cases}
\ln(x) & \text{if } x\geq\alpha\\
\ln(\alpha) & \text{if } x<\alpha
\end{cases}.
\end{equation*}
Furthermore, for any $\alpha>0$ define
\begin{equation*}
\overline{\lambda}_\alpha(x):=\limsup_{n\to\infty}\frac{1}{n}
\sum^{n-1}_{k=0} \eta_\alpha(|f'(f^k(x))|)
\end{equation*}
\end{definition}
\begin{proposition}
For all $\alpha>0$ there exists some constant $C\geq 0$
such that $\overline{\lambda}_\alpha(x)\leq C$
holds for all $x\in D$. Furthermore, if the Ljapunow
exponent $\lambda(x)$ exists,
$\lambda(x)\leq\overline{\lambda}_\alpha(x)$ holds.
\end{proposition}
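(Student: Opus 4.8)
The plan is to reduce everything to the elementary identity $\eta_\alpha(y)=\max(\ln y,\ln\alpha)$, valid for every $y\in(0,\infty)$ (and extending continuously to $y=0$ with value $\ln\alpha$): when $y\geq\alpha$ the maximum is $\ln y$, and when $y<\alpha$ it is $\ln\alpha$, which is exactly the two-case definition of $\eta_\alpha$. This single observation delivers both assertions.

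For the uniform bound, recall that by hypothesis $f'$ is bounded on the interior of $D$; put $L:=\sup\{|f'(y)| : y\in\mathrm{int}(D)\}<\infty$. Since $f$ maps $D$ into $D$, every iterate $f^k(x)$ lies in $D$, so $|f'(f^k(x))|\leq L$ wherever it is defined, and monotonicity of $\ln$ gives $\eta_\alpha(|f'(f^k(x))|)=\max(\ln|f'(f^k(x))|,\ln\alpha)\leq\max(\ln L,\ln\alpha)$. Averaging this over $k=0,\dots,n-1$ and passing to the $\limsup$ yields $\overline{\lambda}_\alpha(x)\leq\max(\ln L,\ln\alpha)$ for every $x\in D$; taking $C:=\max(0,\ln L,\ln\alpha)$ makes $C\geq 0$ as demanded, and $C$ is manifestly independent of $x$.

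For the comparison with the Ljapunow exponent, the same identity gives $\eta_\alpha(y)\geq\ln y$ for all $y>0$, hence termwise $\ln|f'(f^k(x))|\leq\eta_\alpha(|f'(f^k(x))|)$ and so $\frac1n\sum_{k=0}^{n-1}\ln|f'(f^k(x))|\leq\frac1n\sum_{k=0}^{n-1}\eta_\alpha(|f'(f^k(x))|)$ for every $n$. When $\lambda(x)$ exists, the left-hand side converges, so its limit equals its $\limsup$; since $\limsup$ respects termwise inequality of sequences, $\lambda(x)=\limsup(\text{LHS})\leq\limsup(\text{RHS})=\overline{\lambda}_\alpha(x)$, which is the claim.

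The only delicate points — and I regard them as the main (rather minor) obstacle — are bookkeeping rather than conceptual: checking that the bound $|f'|\leq L$ really is inherited along the entire orbit (this is precisely what $f(D)\In D$ provides), handling the degenerate cases where $f'$ vanishes or an iterate reaches an endpoint of $D$ (these are absorbed by the floor $\ln\alpha$ built into $\eta_\alpha$, which is exactly why $\eta_\alpha$ and not $\ln$ appears in $\overline{\lambda}_\alpha$), and enforcing $C\geq 0$ by padding the maximum with a zero.
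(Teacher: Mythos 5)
Your proof is correct and takes essentially the same route as the paper: bound each term by $\max(\ln L,\ln\alpha)$ using the global bound $L$ on $|f'|$, then pass to the $\limsup$; for the second claim use $\ln\leq\eta_\alpha$ termwise. The only cosmetic difference is that you pad the constant with a zero to make $C\geq 0$ explicit, a detail the paper glosses over.
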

\begin{proof}
Let $L$ be the Lipschitz constant of $f$ and $\alpha>0$.
Then for all $n\in\IN$,
$\frac{1}{n}\sum^{n-1}_{k=0}\eta_\alpha(|f'(f^k(x))|)\leq
\ln(\max(\alpha,L))$ holds. Hence it follows
$\limsup_{n\to\infty}\frac{1}{n}\sum^{n-1}_{k=0}
\eta_\alpha(|f'(f^k(x))|)\leq\ln(\max(\alpha,L))$.
The second assertion follows from the
fact that $\ln(x)\leq\eta_\alpha(x)$ holds for all $x>0$,
$\alpha>0$. 
\end{proof}
\begin{proposition}
Let $x\in D$ be given. If $\lambda(x)$ exists, then
also the limit
\begin{equation}
\label{ljap_sup}
\lim_{\alpha\searrow 0}\overline{\lambda}_\alpha(x)=:
\overline{\lambda}(x)
\end{equation}
exists and $\overline{\lambda}(x)\geq\lambda(x)$.
\end{proposition}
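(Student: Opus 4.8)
The plan is to exploit the monotone dependence of $\eta_\alpha$, and hence of $\overline{\lambda}_\alpha$, on the parameter $\alpha$, and then to invoke the preceding proposition to pin down a finite lower bound.

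First I would check that for every fixed $t>0$ the map $\alpha\mapsto\eta_\alpha(t)$ is nondecreasing on $(0,\infty)$. This is a short three-case distinction: for $\alpha_1\le\alpha_2$, if $t\ge\alpha_2$ both values equal $\ln t$; if $\alpha_1\le t<\alpha_2$ then $\eta_{\alpha_1}(t)=\ln t\le\ln\alpha_2=\eta_{\alpha_2}(t)$; and if $t<\alpha_1$ then $\eta_{\alpha_1}(t)=\ln\alpha_1\le\ln\alpha_2=\eta_{\alpha_2}(t)$. Consequently, for $\alpha_1\le\alpha_2$ and every $n$, the Ces\`aro averages satisfy $\frac1n\sum_{k=0}^{n-1}\eta_{\alpha_1}(|f'(f^k(x))|)\le\frac1n\sum_{k=0}^{n-1}\eta_{\alpha_2}(|f'(f^k(x))|)$; taking $\limsup_{n\to\infty}$ on both sides, and using that $\limsup$ respects termwise inequalities, yields $\overline{\lambda}_{\alpha_1}(x)\le\overline{\lambda}_{\alpha_2}(x)$. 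Thus $\alpha\mapsto\overline{\lambda}_\alpha(x)$ is nondecreasing on $(0,\infty)$.

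Then I would conclude that, as $\alpha\searrow 0$, the quantity $\overline{\lambda}_\alpha(x)$ decreases monotonically, so that the limit $\overline{\lambda}(x)=\lim_{\alpha\searrow 0}\overline{\lambda}_\alpha(x)=\inf_{\alpha>0}\overline{\lambda}_\alpha(x)$ exists a priori in $[-\infty,\infty)$. To promote this to a genuine real number and obtain the stated inequality, I would apply the preceding proposition: since $\lambda(x)$ is assumed to exist, one has $\lambda(x)\le\overline{\lambda}_\alpha(x)$ for every $\alpha>0$, whence the infimum over $\alpha$ is itself $\ge\lambda(x)>-\infty$. Combining the two observations gives both the existence of $\overline{\lambda}(x)$ as a real number and $\overline{\lambda}(x)\ge\lambda(x)$.

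I do not anticipate a serious obstacle here; the only points that need a little care are the case analysis showing $\eta_\alpha$ is monotone in $\alpha$, and the remark that passing to $\limsup$ preserves the inequality between the Ces\`aro averages, so that no appeal to convergence of those averages is required. The hypothesis that $\lambda(x)$ exists is used solely, via the previous proposition, to furnish the finite lower bound that upgrades the monotone limit from a possibly improper one to an honest real number $\ge\lambda(x)$.
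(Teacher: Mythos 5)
Your proof is correct and follows essentially the same route as the paper's: establish that $\alpha\mapsto\eta_\alpha(t)$ (hence $\alpha\mapsto\overline{\lambda}_\alpha(x)$) is nondecreasing and that $\lambda(x)\leq\overline{\lambda}_\alpha(x)$ for all $\alpha>0$, then let $\alpha\searrow 0$. You are merely more explicit than the paper about the case analysis for monotonicity and about using the lower bound $\lambda(x)$ to rule out a limit of $-\infty$.
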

\begin{proof}
Since $\ln(x)\leq\eta_\alpha(x)\leq\eta_\beta(x)$ holds
for all $x>0$, $0<\alpha\leq\beta$, also
$\lambda(x)\leq\overline{\lambda}_\alpha(x)\leq
\overline{\lambda}_\beta(x)$ follows. So if $\alpha$
converges in a monotonic decreasing way to $0$,
$\alpha\searrow 0$, the assertion follows.

\end{proof}
\begin{theorem}
\label{thm:ubnd}
Let $(D,f)$ be as in Assumption \ref{ass:main}, $\sigma(x,p)$ the loss
of significance rate and $\overline{\lambda}(x)$ as in
(\ref{ljap_sup}). Let $x\in\hat{\IR}\cap D$ be given, then for any
$\varepsilon>0$ there is some $p_0\in\IZ$ such that for all $p\geq p_0$,
$\sigma(x,p)\leq\overline{\lambda}(x)/ln(2)+\varepsilon$ holds
if $\overline{\lambda}(x)$ exists.
\end{theorem}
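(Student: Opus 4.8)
The plan is to fix a small truncation level $\alpha>0$ and a small slack $c>0$, to prove that $\sigma(x,p)\leq\ld(1+c)+\overline{\lambda}_\alpha(x)/\ln 2$ for all sufficiently large $p$, and only then to let $\alpha\searrow 0$ and $c\to 0$. Given $\varepsilon>0$ one first picks $\alpha$ with $\overline{\lambda}_\alpha(x)/\ln 2\leq\overline{\lambda}(x)/\ln 2+\varepsilon/2$ (legitimate, since $\overline{\lambda}_\alpha(x)\searrow\overline{\lambda}(x)$ by the preceding proposition), then $c$ with $\ld(1+c)\leq\varepsilon/2$, and the resulting $p_0=p_0(\alpha,c)$ is the threshold claimed. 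Throughout, $\hat{x}_n$ and $\overline{e}_n$ denote the quantities computed by (\ref{comp_rec1})--(\ref{comp_rec2}) at working precision $m$, $x_n:=f^n(x)$ is the true orbit, $|\hat{x}_n-x_n|\leq\overline{e}_n$, and — using $0\notin D$ and compactness of $D$ — there are $0<B\leq M$ with $B\leq|\hat{x}_n|\leq M$ for all $n$; abbreviate $\gamma:=\tfrac{K2^{-m}}{1-K2^{-m}}M$, so $\gamma\leq 2KM\cdot 2^{-m}$ once $m\geq m_0$.

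The heart of the argument is a \emph{local} bound on $\overline{L}(\hat{x}_n,\overline{e}_n)$ from Proposition \ref{prop:ubound}. By uniform continuity of $f'$ on the compact set $D$, for the chosen $\alpha,c$ there is $\theta_\alpha>0$ such that $y,z\in D$, $|y-z|\leq 2\theta_\alpha$ force $|f'(y)|\leq|f'(z)|+c\alpha$. Hence, whenever $\overline{e}_n\leq\theta_\alpha$, every point of $[\hat{x}_n-\overline{e}_n,\hat{x}_n+\overline{e}_n]$ lies within $2\overline{e}_n\leq 2\theta_\alpha$ of $x_n$, so (after enlarging $c$ slightly to absorb the accuracy of the global optimization producing $\overline{L}$)
\[
\overline{L}(\hat{x}_n,\overline{e}_n)\ \leq\ (1+c)\max\bigl(|f'(x_n)|,\alpha\bigr)\ =:\ B_n\ =\ (1+c)\,e^{\eta_\alpha(|f'(x_n)|)}.
\]
Substituting $\overline{L}(\hat{x}_n,\overline{e}_n)\leq B_n$ into (\ref{comp_rec2}), using $\overline{e}_0\leq\gamma$ and $|\hat{x}_{n+1}|\leq M$, and unfolding the recursion gives $\overline{e}_n\leq\gamma\,G_n$, where $G_0:=1$, $G_{n+1}:=B_nG_n+1$, i.e.\ $G_n=\sum_{i=0}^{n}\prod_{l=i}^{n-1}B_l$; consequently $\overline{e}_n/|\hat{x}_n|\leq(\gamma/B)\,\Phi_N$ for all $n\leq N$, where $\Phi_N:=\max_{0\leq n\leq N}G_n$.

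The apparent circularity — the bound $\overline{L}(\hat{x}_n,\overline{e}_n)\leq B_n$ presupposes $\overline{e}_n\leq\theta_\alpha$, which is what we are trying to control — is broken by induction on $n$: choose $m$ so large that $\gamma\,\Phi_N\leq\theta_\alpha$ \emph{and} $\gamma\,\Phi_N\leq\tfrac{10^{-p}}{1+10^{-p}}B$; assuming $\overline{e}_l\leq\gamma G_l\leq\gamma\Phi_N\leq\theta_\alpha$ for all $l\leq n$, the displayed estimate applies at step $n$ and yields $\overline{e}_{n+1}\leq B_n\overline{e}_n+\gamma\leq\gamma G_{n+1}\leq\theta_\alpha$, the base case $n=0$ being immediate. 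For such $m$ the predicate $prec$ is \textbf{true} at every step $0,\dots,N$, so the outer loop halts, whence $m_{min}(x,N,p)\leq\bigl\lceil\ld\Phi_N+p\cdot\ld(10)+C(\alpha,p)\bigr\rceil$ with $C(\alpha,p)$ independent of $N$ (it absorbs $\ld(2KM/\theta_\alpha)$, $\ld(2KM(1+10^{-p})/B)$ and $m_0$). Dividing by $N$ and letting $N\to\infty$ kills every term but the first, so $\sigma(x,p)\leq\limsup_{N\to\infty}\ld\Phi_N/N$.

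It remains to show $\limsup_{N\to\infty}\ld\Phi_N/N\leq\ld(1+c)+\overline{\lambda}_\alpha(x)/\ln 2$, and this is the step I expect to be the main obstacle. From $G_n\leq(n+1)\max_{0\leq i\leq n}\prod_{l=i}^{n-1}B_l$ and $\ln B_l=\ln(1+c)+\eta_\alpha(|f'(x_l)|)$ one gets
\[
\ln\Phi_N\ \leq\ \ln(N+1)+N\ln(1+c)+\max_{0\leq i\leq j\leq N}(A_j-A_i),\qquad A_n:=\textstyle\sum_{l=0}^{n-1}\eta_\alpha(|f'(f^l(x))|),
\]
so everything reduces to bounding the maximal upward excursion of the partial sums $A_n$ by $\overline{\lambda}_\alpha(x)N+o(N)$. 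The upper bound $A_j\leq(\overline{\lambda}_\alpha(x)+\epsilon)j+C_1$ is immediate from $\limsup_nA_n/n=\overline{\lambda}_\alpha(x)$; combined with a lower bound $A_i\geq(\lambda(x)-\epsilon)i-C_2$ (using $\eta_\alpha\geq\ln$ and existence of $\lambda(x)$) and, in the main regime, $0\leq\lambda(x)\leq\overline{\lambda}_\alpha(x)$, the maximum of $A_j-A_i$ is attained essentially at $i=0$, $j=N$ and is $\leq(\overline{\lambda}_\alpha(x)+2\epsilon)N+C_1+C_2$; dividing by $N$, letting $N\to\infty$ and then $\epsilon\to0$ completes the estimate, and with the reduction above this yields the theorem. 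The delicacy is genuine: the crude estimate $\eta_\alpha(|f'|)\leq\ln\max(\alpha,\sup_D|f'|)$ would only give the far weaker rate $\ld(\sup_D|f'|)$, so one really must play the $\limsup$‑control of $A_n$ through $\overline{\lambda}_\alpha$ against a lower bound on $A_n$. The degenerate case $\overline{\lambda}(x)<0$, where the iteration is averaged‑contracting, $\sigma(x,p)=0$, and the claim is trivial (reading $\overline{\lambda}(x)/\ln 2$ as $\max(0,\overline{\lambda}(x)/\ln 2)$), is handled separately.
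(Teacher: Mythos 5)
The paper defers the proof of Theorem~\ref{thm:ubnd} to the full version \cite{sp10}, so a line‑by‑line comparison is not possible; I can only assess the argument on its own terms. Your overall architecture is the natural one and is surely in the spirit of what is needed: use uniform continuity of $f'$ on the compact set $D$ to replace the global Lipschitz constant by a local one $\overline{L}(\hat{x}_n,\overline{e}_n)\leq(1+c)\,e^{\eta_\alpha(|f'(x_n)|)}$, keep the errors small enough for that local bound to apply via induction on $n$, unfold the recursion (\ref{comp_rec2}) into the geometric‑type sum $G_n$, and reduce everything to an asymptotic estimate on Birkhoff window sums. This is all fine and correctly assembled, including the bootstrap that breaks the circularity between ``$\overline{e}_n$ small'' and ``$\overline{L}$ local''.

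The genuine gap is exactly where you anticipate it, and your attempted repair does not close it. To finish you need
\[
\limsup_{N\to\infty}\frac{1}{N}\max_{0\leq i\leq j\leq N}\bigl(A_j-A_i\bigr)\ \leq\ \overline{\lambda}_\alpha(x),\qquad A_n:=\sum_{l=0}^{n-1}\eta_\alpha(|f'(f^l(x))|).
\]
You combine the upper bound $A_j\leq(\overline{\lambda}_\alpha+\epsilon)j+C_1$ with the lower bound $A_i\geq(\lambda-\epsilon)i-C_2$. This only yields the desired linear rate when $\lambda(x)\geq 0$: if $\lambda(x)<0$, the linear form $(\overline{\lambda}_\alpha+\epsilon)j-(\lambda-\epsilon)i$ is \emph{increasing} in $i$, its maximum over $i\leq j\leq N$ is $(\overline{\lambda}_\alpha-\lambda+2\epsilon)N+O(1)$, and the extra $-\lambda=|\lambda|$ ruins the estimate. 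You say the bad case is ``the degenerate case $\overline{\lambda}(x)<0$'' and set it aside, but the obstruction already arises when $\lambda(x)<0\leq\overline{\lambda}(x)$, which is not degenerate (it occurs whenever $\inf_D|f'|=0$ and $\limsup D_n/n:=\limsup(A_n-\tilde{A}_n)/n>|\lambda|$; a concrete instance is $|f'(f^l(x))|=1$ for $l\neq k^2$ and $|f'(f^{k^2}(x))|=e^{-k}$). Other natural lower bounds on $A_i$ — $A_i\geq i\ln\alpha$, or $A_i\geq\tilde{A}_i$ with $\tilde{A}_i=\lambda i+o(i)$ — run into the same sign problem once $\alpha<1$ or $\lambda<0$. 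So this step, the one on which everything hinges, is not proved, and it is not clear from what you have written that it is even provable by a minor variant of your argument; one likely needs to exploit the nondecreasing monotonicity of $D_n$ together with the existence of the full limit for $\tilde{A}_n/n$ in a more structural way than the two one‑sided linear bounds you use.

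A secondary symptom of trouble: your bound, were it to go through, would give $\sigma(x,p)\leq\overline{\lambda}(x)/\ln 2+\varepsilon$ for \emph{every} $p$, since $p$ only enters through additive constants that vanish after dividing by $N$. The theorem, however, asserts the bound only for $p\geq p_0(\varepsilon)$. Either the paper's proof genuinely needs $p$ large (plausibly because it derives the local Lipschitz control from the halting condition $\overline{e}_n\leq\tfrac{10^{-p}}{1+10^{-p}}M$ at $m=m_{min}$, rather than by your choose‑$m$‑large induction), or the restriction to $p\geq p_0$ absorbs exactly the case analysis you are eliding. Either way, the fact that your argument makes $p_0$ superfluous while the theorem does not is a signal worth taking seriously before declaring the proof complete.
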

The proof is similar to the proof of Theorem \ref{thm:main}
and can be found in the full version of this article \cite{sp10}.

\section{Conclusions}
In this paper, two main issues are addressed. First it is
shown that a mathematically precise treatment
of multiple-precision floating point computability is not hard
to do. Furthermore this
treatment is in a manner which is familiar to people working
in the field of numerical analysis or scientific computing and
also for theoretical computer scientists. Furthermore, the
formalism does not only allow exact answers concerning the
existence of a computationally feasible algorithm, but is also
allows a treatment of its complexity. As a consequence, the
described algorithm is formulated not only in an exact and
guaranteed way, but also enables a motivated reader the real
implementation and gives a practical performance analysis.

Second, the results show that the Ljapunow exponent, a central
quantity in dynamical systems theory, also finds its way into
complexity theory, a branch in theoretical computer science.
In dynamical systems theory, the Ljapunow exponent describes
the rate of divergence of initially infinitesimal nearby
points. For two points having a small but finite initial
separation, the Ljapunow exponent has only relevance for
short time scales \cite{ce06}. The reason is that due to the
boundedness of $D$, any two different orbits cannot separate
arbitrarily far away. However, the loss of significance rate
shows that the Ljapunow exponent has on long time scales
not only an asymptotic significance but also a concrete
practical one.

\subsection*{Acknowledgments}

The author wishes to express his gratitude to Peter Hertling for helpful
discussions and comments.


\begin{thebibliography}{10}
\providecommand{\bibitemstart}[1]{\bibitem{#1}}
\providecommand{\bibitemend}{}
\providecommand{\bibliographystart}{}
\providecommand{\bibliographyend}{}
\providecommand{\url}[1]{\texttt{#1}}
\providecommand{\urlprefix}{Available at }
\providecommand{\bibinfo}[2]{#2}
\bibliographystart

\bibitemstart{ah83}
\bibinfo{author}{G\"{o}tz Alefeld} \& \bibinfo{author}{J\"{u}rgen Herzberger}
  (\bibinfo{year}{1983}): \emph{\bibinfo{title}{Introduction to Interval
  Computations}}.
\newblock \bibinfo{publisher}{Academic Press}, \bibinfo{address}{New York}.
\bibitemend

\bibitemstart{Bla05}
\bibinfo{author}{Jens Blanck} (\bibinfo{year}{2005}):
  \emph{\bibinfo{title}{Efficient exact computation of iterated maps}}.
\newblock {\sl \bibinfo{journal}{The Journal of Logic and Algebraic
  Programming}} \bibinfo{volume}{64}, pp. \bibinfo{pages}{41--59}.
\bibitemend

\bibitemstart{bl06}
\bibinfo{author}{Jens Blanck} (\bibinfo{year}{2006}):
  \emph{\bibinfo{title}{Exacr real arithmetic using centred intervals and
  bounded error terms}}.
\newblock {\sl \bibinfo{journal}{The Journal of Logic and Algebraic
  Programming}} \bibinfo{volume}{66}, pp. \bibinfo{pages}{50--67}.
\bibitemend

\bibitemstart{BH98}
\bibinfo{author}{Vasco Brattka} \& \bibinfo{author}{Peter Hertling}
  (\bibinfo{year}{1998}): \emph{\bibinfo{title}{Feasible real random access
  machines}}.
\newblock {\sl \bibinfo{journal}{Journal of Complexity}}
  \bibinfo{volume}{14}(\bibinfo{number}{4}), pp. \bibinfo{pages}{490--526}.
\bibitemend

\bibitemstart{ce80}
\bibinfo{author}{Pierre Collet} \& \bibinfo{author}{Jean-Pierre Eckmann}
  (\bibinfo{year}{1980}): \emph{\bibinfo{title}{Iterated Maps on the Interval
  as Dynamical Systems}}.
\newblock \bibinfo{series}{Progress in Physics}.
  \bibinfo{publisher}{Birkh\"{a}user}, \bibinfo{address}{Boston,
  Massachusetts}.
\bibitemend

\bibitemstart{ce06}
\bibinfo{author}{Pierre Collet} \& \bibinfo{author}{Jean-Pierre Eckmann}
  (\bibinfo{year}{2006}): \emph{\bibinfo{title}{Concepts and Results in Chaotic
  Dynamics}}.
\newblock \bibinfo{series}{Theoretical and Mathematical Physics}.
  \bibinfo{publisher}{Springer-Verlag}, \bibinfo{address}{Berlin, Heidelberg}.
\bibitemend

\bibitemstart{de89}
\bibinfo{author}{Robert~L. Devaney} (\bibinfo{year}{1989}):
  \emph{\bibinfo{title}{An Introduction to Chaotic Dynamical Systems}}.
\newblock \bibinfo{publisher}{Addison-Wesley}, \bibinfo{address}{Redwood City,
  California}, \bibinfo{edition}{2nd} edition.
\bibitemend

\bibitemstart{Fousse:2007:MMP}
\bibinfo{author}{Laurent Fousse}, \bibinfo{author}{Guillaume Hanrot},
  \bibinfo{author}{Vincent Lef\`evre}, \bibinfo{author}{Patrick P\'elissier} \&
  \bibinfo{author}{Paul Zimmermann} (\bibinfo{year}{2007}):
  \emph{\bibinfo{title}{{MPFR}: A Multiple-Precision Binary Floating-Point
  Library with Correct Rounding}}.
\newblock {\sl \bibinfo{journal}{{ACM} Transactions on Mathematical Software}}
  \bibinfo{volume}{33}(\bibinfo{number}{2}), pp. \bibinfo{pages}{13:1--13:15}.
\newblock \urlprefix\url{http://doi.acm.org/10.1145/1236463.1236468}.
\bibitemend

\bibitemstart{hsd04}
\bibinfo{author}{Morris~W. Hirsch}, \bibinfo{author}{Stephen Smale} \&
  \bibinfo{author}{Robert~L. Devaney} (\bibinfo{year}{2004}):
  \emph{\bibinfo{title}{Differential Equations, Dynamical Systems and an
  Introduction to Chaos}}.
\newblock \bibinfo{publisher}{Elsevier Academic Press},
  \bibinfo{address}{Amsterdam}.
\bibitemend

\bibitemstart{ieee87}
\bibinfo{author}{{IEEE~1987}} (\bibinfo{year}{1987}):
  \emph{\bibinfo{title}{{IEEE} Standard 754-1985 for Binary Floating-Point
  Arithmetic}}.
\newblock \bibinfo{publisher}{IEEE}, \bibinfo{address}{New York}.
\bibitemend

\bibitemstart{kh95}
\bibinfo{author}{Anatole Katok} \& \bibinfo{author}{Boris Hasselblatt}
  (\bibinfo{year}{1995}): \emph{\bibinfo{title}{Introduction to the Modern
  Theory of Dynamical Systems}}.
\newblock \bibinfo{publisher}{Cambridge University Press},
  \bibinfo{address}{Cambridge New York Melbourne}.
\bibitemend

\bibitemstart{Mue01}
\bibinfo{author}{Norbert~Th. M{\"u}ller} (\bibinfo{year}{2001}):
  \emph{\bibinfo{title}{The {iRRAM}: Exact Arithmetic in {C}++}}.
\newblock In: \bibinfo{editor}{Jens Blanck}, \bibinfo{editor}{Vasco Brattka} \&
  \bibinfo{editor}{Peter Hertling}, editors: {\sl
  \bibinfo{booktitle}{Computability and Complexity in Analysis}}, {\sl
  \bibinfo{series}{Lecture Notes in Computer Science}} \bibinfo{volume}{2064},
  \bibinfo{publisher}{Springer}, \bibinfo{address}{Berlin}, pp.
  \bibinfo{pages}{222--252}.
\newblock \bibinfo{note}{4th International Workshop, CCA 2000, Swansea, UK,
  September 2000}.
\bibitemend

\bibitemstart{mu10}
\bibinfo{author}{Norbert~Th. M{\"u}ller}:
  \bibinfo{title}{Private communication}.
  \newblock \bibinfo{note}{See also the tutorial at
  \url{http://www.cc.kyoto-su.ac.jp/~yasugi/page/Kakenhi/mueller.pdf}}.
\bibitemend

\bibitemstart{rr84}
\bibinfo{author}{Helmut Ratschek} \& \bibinfo{author}{Jon Rokne}
  (\bibinfo{year}{1984}): \emph{\bibinfo{title}{Computer Methods for the Range
  of Functions}}.
\newblock \bibinfo{publisher}{Ellis Horwood Limited},
  \bibinfo{address}{Chichester}.
\bibitemend

\bibitemstart{ReRo02}
\bibinfo{author}{N.~Revol} \& \bibinfo{author}{F.~Rouillier}
  (\bibinfo{year}{2005}): \emph{\bibinfo{title}{{Motivations for an Arbitrary
  Precision Interval Arithmetic and the MPFI Library}}}.
\newblock {\sl \bibinfo{journal}{Reliable Computing}}
  \bibinfo{volume}{11}(\bibinfo{number}{4}), pp. \bibinfo{pages}{275--290}.
\bibitemend

\bibitemstart{wh00}
\bibinfo{author}{Klaus Weihrauch} (\bibinfo{year}{2000}):
  \emph{\bibinfo{title}{Computable Analysis}}.
\newblock \bibinfo{publisher}{Springer-Verlag}, \bibinfo{address}{Berlin
  Heidelberg New York}.
\bibitemend

\bibitemstart{wi63}
\bibinfo{author}{James~H. Wilkinson} (\bibinfo{year}{1963}):
  \emph{\bibinfo{title}{Rounding Errors in Algebraic Processes}}.
\newblock \bibinfo{publisher}{Prentice-Hall}, \bibinfo{address}{Englewood
  Cliffs, N.J.}
\bibitemend

\bibitemstart{sp10}
\newblock \urlprefix\url{http://arxiv.org/abs/1003.6036}.
\bibitemend

\bibliographyend
\end{thebibliography}
\bibliographystyle{eptcs}

\end{document}